\newtheorem{algo}[theorem]{Algorithm}
\newcommand{\mA}{{\cal A}}
\newcommand{\mC}{{\cal C}}
\newcommand{\mK}{{\cal K}}
\newcommand{\mL}{{\cal L}}
\newcommand{\mP}{{\cal P}}
\newcommand{\mR}{{\cal R}}
\newcommand{\mS}{{\cal S}}
\newcommand{\mT}{{\cal T}}
\begin{document}

\title{\Large On Finding Lekkerkerker-Boland Subgraphs}

\institute{Colorado State University, Fort Collins CO 80521, USA}

\author{Nathan Lindzey~\thanks{lindzey@cs.colostate.edu,
Computer Science Department,
Colorado State University,
Fort Collins, CO, 80523-1873
U.S.A.},
Ross M. McConnell~\thanks{rmm@cs.colostate.edu,
Computer Science Department,
Colorado State University,
Fort Collins, CO, 80523-1873
U.S.A.}
}

\date{}

\maketitle

\begin{abstract}
Lekkerkerker and Boland characterized the minimal forbidden induced subgraphs 
for the class of interval graphs.  We give a linear-time algorithm to
find one in any graph that is not an interval graph.  Tucker characterized the
minimal forbidden submatrices of matrices that do not have the consecutive-ones
property.  We give a linear-time algorithm to find one in any matrix that
does not have the consecutive-ones property.
\end{abstract}

\section{Introduction}

A graph is an {\em interval graph} if it is the intersection graph of a set of intervals
on a line.  Such a set of intervals is known as an {\em interval model} of the graph.
They are an important subclass of perfect graphs~\cite{Gol80}, they have been written
extensively about and they model constraints in various combinatorial optimization
and decision problems~\cite{Roberts78, SpinGR}.   They have a rich structure and history,
and interesting relationships to other graph classes.  For a survey, see~\cite{BLS99}.

If $M$ is a 0-1 (binary) matrix, we let $size(M)$ denote the number
of rows, columns and 1's.
Such a matrix has the {\em consecutive-ones property} if there exists a reordering
of its columns such that, in every row, the 1's are consecutive.
A {\em clique matrix} of a graph is a matrix that has a row for each
vertex, a column for each clique, and a 1 in row $i$, column $j$ if vertex $i$ is contained
in clique $j$.  
A graph is an interval graph if and only if its clique matrices have the consecutive-ones
property, see, for example~\cite{Gol80}.

In 1962, Lekkerkerker and Boland described the minimal induced forbidden subgraphs for
the class of interval graphs~\cite{LB62}, known as the {\em LB graphs} (Figure~\ref{fig:LBGraphs}).
Ten years later,
Tucker described the minimum forbidden submatrices for consecutive-ones matrices~\cite{Tuck72}.
These are depicted in Figure~\ref{fig:TuckGraphs}.  Not surprisingly, there is a relationship
between the intersection graphs of rows of Tucker matrices and the LB graphs, depicted in Figure~\ref{fig:LBGraphs}.

\begin{figure}
\centerline{\includegraphics[]{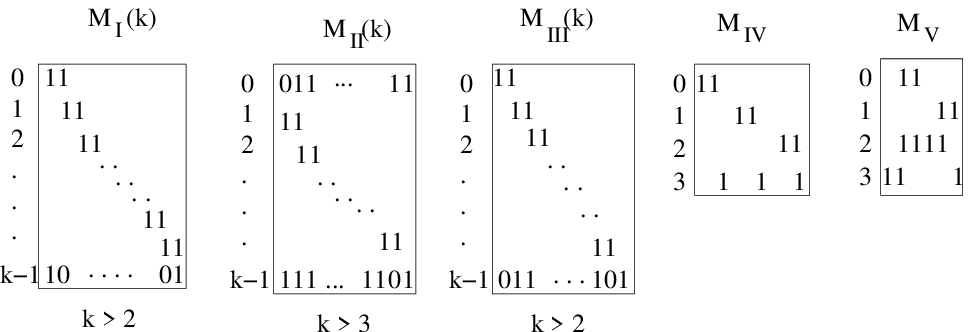}}
\caption{The Minimal Forbidden Submatrices for Consecutive-Ones Matrices.  For $M_I$, $k \geq 3$,
and for $M_{II}$ and $M_{II}$, $k \geq 4$.   $M_{IV}$ and $M_I$ have fixed size.}\label{fig:TuckGraphs}
\end{figure}

In this paper, we give a linear time bound for finding one of the LB
subgraphs when a graph is not an interval graph.   As part of our algorithm, we
also give a linear-time $(O(size(M))$ bound for finding one of Tucker's 
submatrices in a matrix $M$ that does not have the consecutive-ones property. 
This latter problem was solved previously in $O(n*size(M))$ time in~\cite{CCHS10}, where
$n$ is the number of rows of the matrix.

A graph is {\em chordal} if it has no {\em chordless cycle} ($G_{III}$ on four or more vertices).
Figure~\ref{fig:LBExtension} seems to imply that the rows of a given Tucker submatrix 
of the clique matrix can be extended to an interval model of an LB subgraph by including 
at most three additional rows of the matrix, giving the rows of vertices that induce
an LB subgraph.  Figure~\ref{fig:LBExtension} gives a 
counterexample.  Fortunately, it is true in the case of chordal graphs, but the example
illustrates the need for a proof; it does not follow from seemingly obvious considerations,
such as that no clique is a subset of any other.  This gives an LB subgraph if $G$ is
chordal, and when it is not chordal, the algorithm of~\cite{TarjYan85} already gives a $G_{III}$.

An interval graph is {\em proper} if there exists an interval representation where
no interval is a subset of another.  It is a {\em unit} interval graph if
there exists an interval representation where all intervals have the same length.
These graph classes are the same, and Wegner showed that a graph is a proper
interval graph if and only if it does not have a chordless cycle,
the special case of $G_{IV}$ or $G_V$
for $n = 6$ or the {\em claw} ($K_{1,3}$) as an induced subgraph~\cite{Wegner67}.
Hell and Huang give an algorithm that produces one of them in linear time~\cite{HellUIG} .
The problem of finding a forbidden subgraph reduces easily to finding an LB
subgraph.  Each of the LB graphs is either one of Wegner's forbidden subgraphs or contains
an obvious claw, and finding a claw in linear time, given an interval model, is elementary.
By itself, this approach has no obvious advantages over Hell and Huang's elegant
algorithm, but such reductions are useful when studying or programming a collection of related algorithms.

A {\em certifying algorithm} is an algorithm that provides, with each output, a simple-to-check
proof that it has answered correctly~\cite{KMMSJournal,MMNSCert}.  An interval model gives a 
certificate that a graph is an interval graph, and an LB subgraph gives one 
if the graph is not an interval graph.  However, a certifying algorithm was given previously
in~\cite{KMMSJournal}.  The ability to give a consecutive-ones ordering or a Tucker 
submatrix in linear time gives a linear-time certifying algorithm for consecutive-ones
matrices, but one was given previously in~\cite{McCSODA04}.  
The previous certificates are easier to check, which is a desirable
property for certifying algorithms.  However, they are neither minimal nor
uniquely characterized.  Aside from the theoretical interest in LB subgraphs, it is easy to obtain 
a minimal certificate of the form given in~\cite{KMMSJournal} from an LB subgraph found by
the algorithm we describe below.  An open question 
is whether a minimal, unique, especially simple, or otherwise interesting special case of the 
certificate from~\cite{McCSODA04} can be obtained by applying the algorithm of that paper to a Tucker 
submatrix obtained by the algorithm of the present paper.
Tucker submatrices may be useful in heuristics for finding large submatrices that have the consecutive-ones
property, small Tucker matrices, or identifying errors in biological data~\cite{StoyeWittler09, CCHS10}.  Our techniques
provide new tools for such heuristics.

\begin{figure}
\centerline{\includegraphics[]{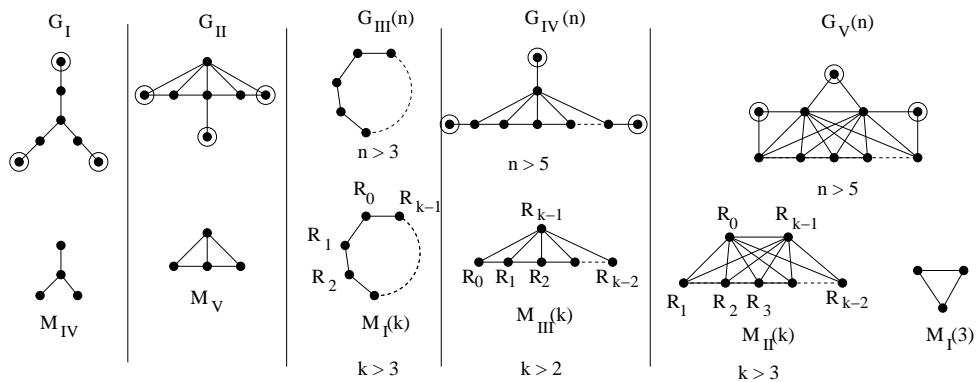}}
\caption{$G_I$ through $G_{V}$ are the minimal non-interval graphs discovered by
Lekkerkerker and Boland.  
Below them are the intersection graphs of the corresponding Tucker matrices.}\label{fig:LBGraphs}
\end{figure}

\section{Preliminaries}

Given a graph $G$, let $V$ denote the number of vertices and $E$ denote the number of edges.
If $\emptyset \subset X \subseteq V$, let $G[X]$ denote the subgraph induced by $X$.
Standard sparse representations of 0-1 matrices take $O(size(M))$ space
to represent $M$.  We treat the rows and columns as {\em sets}, where
each row $R$ is the set of columns where the row has a 1 and each column $C$
is the set of rows where the column has a 1.  
Suppose $\mR$ is the set of rows of a consecutive-ones ordered matrix
and $(C_1, C_2, \ldots, C_m)$ is the ordering of the columns.
In linear time, we can find, for each row, the leftmost and rightmost column
in the row.  Let us call these the {\em left endpoint} and {\em right endpoint} of the row.

That interval graphs are a subclass of the class of chordal graphs follows
from inclusion of $G_{III}$ among the LB subgraphs.  Rose, Tarjan and Lueker give an
$O(V+E)$ algorithm that finds whether a graph is a chordal graph, and, if so, produces its maximal
cliques~\cite{RTL:triangulated}.  Otherwise, the algorithm of~\cite{TarjYan85} produces
a chordless cycle ($G_{III}$) in linear time.

When a graph is chordal, the problem of deciding
whether it is an interval graph reduces to the problem of deciding whether its clique
matrix has the consecutive-ones property.   
Booth and Lueker further reduced this problem to
that of finding a maximal prefix $\mR' = \{R_1, R_2, \ldots, R_r\}$ of the rows of
a binary matrix $M$ that has the consecutive-ones property, in $O(size(M))$ time.

Assigning a left-to-right order to children of each internal node
of a rooted tree results in a unique left-to-right order of the leaves.
Booth and Lueker's algorithm produces a
{\em PQ tree}, for $\mR'$.  The PQ tree represents all possible consecutive-ones
orderings of $\mR'$.  There is one leaf $\{c\}$ for each column $c$.
The internal nodes of the PQ tree consists of {\em P nodes} and {\em Q nodes}.  
The consecutive-ones ordering of columns are given by the leaf
orders obtainable by assigning an arbitrary left-to-right order to 
children of each P node, and for each Q node, assigning the given 
left-to-right order or its reverse.  

Though the PQ tree can be represented using $O(1)$ space per node, 
conceptually, we will consider each node of the PQ tree
to be set given by the disjoint union of
its children; equivalently, it is the union of its leaf descendants.

\begin{definition}\label{lem:Venn}
Let $\mS$ be a collection of subsets of a set $U$.  Two elements of $U$ are in the
same {\em Venn class} if they are elements of the same set of members of $\mS$.
The {\em unconstrained Venn class} consists of those elements of $U$ that are not in
any member of $\mS$; all others are {\em constrained}.
Two sets $R_1, R_2$ {\em overlap} if their intersection is nonempty, but
neither is a subset of the other.  The {\em overlap graph} of $\mS$ is
the undirected graph whose vertices are the members of $\mS$, and $R_1, R_2 \in \mS$ are
adjacent if and only if $R_1$ and $R_2$ overlap.
\end{definition}

\begin{lemma}\label{lem:Meidanis}~\cite{MeidanisPT98}
A set of columns is a Q node of a consecutive-ones matrix
$M$ if and only if it is the union of rows of a connected
component of the overlap graph of rows of $M$.
The Venn classes of rows in this component are its children.
\end{lemma}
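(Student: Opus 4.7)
The plan is to fix a consecutive-ones ordering of $M$ so that each row is an interval of columns, and to work with the PQ-tree $T$ produced for $M$. Write $L(v)$ for the columns beneath a node $v \in T$, and for each row $R$ let $\pi(R)$ be the lowest node $v$ with $R \subseteq L(v)$. A short structural dictionary is needed first. If $\pi(R)$ is a P-node, then $R = L(\pi(R))$, since otherwise the free permutability of the P-node's children would break the consecutiveness of $R$ under some equivalent ordering. If $\pi(R)$ is a Q-node with children $c_1,\ldots,c_k$ in order, then $R$ is exactly $L(c_i) \cup \cdots \cup L(c_j)$ for some $1 \le i < j \le k$.

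Next I would show that overlapping rows share the same attachment node. If $\pi(R)$ and $\pi(R')$ are incomparable in $T$, their column sets are disjoint; if one is a proper ancestor of the other, the dictionary makes the ancestor-row a union of whole children of its attachment node while the descendant-row lies inside a single such child, so one either contains the other or they are disjoint. Either way, they cannot overlap. Hence every connected component of the overlap graph is confined to a single fibre $\mR_v := \{R : \pi(R) = v\}$.

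The substantive content, and the main obstacle, is to show that for a Q-node $v$ the fibre $\mR_v$ is itself overlap-connected, $\bigcup_{R \in \mR_v} R = L(v)$, and the Venn classes of $\mR_v$ are precisely $c_1, \ldots, c_k$. Identifying each row of $\mR_v$ with its integer interval $[i_R,j_R]$ of child indices, I would argue that for every boundary $\ell \in \{1,\ldots,k-1\}$ there is some row in $\mR_v$ whose index interval straddles $\ell$. I would extract this from the uniqueness of the PQ-tree: if no row straddled $\ell$, then $c_1,\ldots,c_\ell$ and $c_{\ell+1},\ldots,c_k$ could be independently reversed without breaking any row's consecutiveness, so $v$ would split into two Q-nodes beneath a common parent, contradicting that $T$ is canonical. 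With straddling rows at every boundary, chaining them gives overlap-connectivity of $\mR_v$, the same rows witness that distinct children lie in distinct Venn classes, and since every child lies in some row of $\mR_v$, the union is all of $L(v)$.

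To close the biconditional, the fibre analysis above combined with the fact that overlapping rows share $\pi$ identifies the non-trivial connected components of the overlap graph, other than those arising from P-node fibres of pairwise equal rows, with exactly the Q-nodes of $T$, matching the lemma's Venn-class description of the children.
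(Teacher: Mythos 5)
The paper itself offers no proof of this lemma---it is quoted from \cite{MeidanisPT98}---so your argument has to be judged on its own merits, and it contains a genuine gap at its central step. You claim that for a Q node $v$ the fibre $\mR_v=\{R:\pi(R)=v\}$ is overlap-connected, so that fibres and components of the overlap graph coincide. This is false. Take columns $\{1,2,3\}$ and rows $\{1,2\}$, $\{2,3\}$, $\{1,2,3\}$: the PQ tree is a single Q node with the three leaf children, and all three rows attach at it, but $\{1,2,3\}$ contains each of the other two rows and is therefore an isolated vertex of the overlap graph. So a Q node's fibre can decompose into several components, and the lemma's correspondence is with the \emph{nontrivial} components (those containing an overlapping pair), not with fibres. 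The specific step that fails is the chaining: two rows straddling adjacent boundaries of $v$ always intersect, but one may contain the other (a row equal to $L(v)$ straddles every boundary and may overlap nothing), so straddling rows at consecutive boundaries need not be adjacent in the overlap graph. Consequently, from ``every boundary is straddled by some row of $\mR_v$'' you cannot conclude that $\mR_v$ is connected, nor the statement the lemma actually needs, namely that a \emph{single} component supplies straddling rows for every boundary, has union $L(v)$, and has the children $L(c_1),\dots,L(c_k)$ as its Venn classes.

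The repair is to argue per component rather than per fibre: show first (via your dictionary, which is fine) that the rows of any nontrivial overlap component all attach at one Q node $v$ and that the component's union and Venn classes are unions of consecutive children of $v$; then show that if the union missed a child, or if some Venn class of the component consisted of more than one child, the tree would admit an extra valid column ordering (your block-reversal/relocation device, applied to the component's Venn classes rather than to all of $\mR_v$), contradicting that the PQ tree generates exactly the valid orderings. That is essentially the route of Meidanis et al. Your boundary-straddling argument, suitably restated, does serve the converse direction (every Q node is ``glued'' by overlaps, so some nontrivial component maps onto it), but note also that your closing caveat about ``P-node fibres of pairwise equal rows'' does not dispose of the degenerate cases: a singleton component (a row overlapping nothing) can have as its union a P node or a proper segment of a Q node, so the biconditional must be stated and proved for nontrivial components, which your write-up never quite pins down.
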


\section{Breadth-first search on the overlap graph of a collection of sets, given a consecutive-ones
ordering}\label{sect:BFS}

In linear time, we may label each row of a consecutive-ones ordered matrix with its
left and right endpoints.
We may then label each column $c_i$ of a consecutive-ones ordered matrix with
the set of rows that have their left endpoints in $c_i$.
In linear time, we can then radix sort the list of sets that have their left endpoint at $c_i$
in descending order of index of right endpoint, yielding a list $\mR_i$.  This is accomplished
with a single radix sort that has the index of the left endpoint as its primary sort key and index
of the right endpoint as the secondary sort key.  By symmetry, we can construct a list $\mL_i$
of rows that have their right endpoint in each column $c_i$, sorted in ascending order
of index of left endpoint.

This allows us to perform a breadth-first search on the overlap graph of the rows
in time linear in the size of the matrix, as follows.  The lists
$\mL_i$ and $\mR_i$ are represented with doubly-linked lists.
We maintain the invariant that elements that have been placed in
the BFS queue have been removed from these lists.  When a consecutive-ones ordered
set $R$ comes to the front of the queue, we traverse its list
$(c_j, c_{j+1}, \ldots, c_k)$ of
columns.  For each $c_h$ in the list,
we remove elements from $\mR_h$ and place them in the queue, until we reach an element in $\mR_i$
whose right endpoint is no farther to the right than $c_k$.  All of the removed elements
overlap $R$.  Since $\mR_h$ is sorted in descending order of right endpoint, all
these elements are a prefix of $\mR_h$, and any remaining elements in the list do
not overlap $R$.  When we remove an element from $\mR_h$, we remove it from any
list $\mL_{h'}$ that it is a member of, to maintain the invariant.  This takes
$O(1)$ time for each element moved to the BFS queue, plus $O(1)$ time
for each column of $R$.
The lists $L_h$ are handled symmetrically.
Summing over all rows $R$, the time is $O(size(M))$.

\section{Finding Tucker Submatrices}\label{sect:Tucker}

\subsection{Tucker matrices with at most four rows}

\begin{lemma}\label{lem:lastRowTucker}
If a set $\mR'$ of rows has the consecutive-ones property and $Z$ is a row such that
$\mR = \mR' \cup \{Z\}$ does not,
then $Z$ is one of the rows of every Tucker submatrix in $\mR$.
\end{lemma}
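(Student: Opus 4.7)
The plan is to prove this by contradiction, exploiting the hereditary nature of the consecutive-ones property. Suppose there exists a Tucker submatrix $T$ of $\mR$ whose row set does not include $Z$. Then every row of $T$ lies in $\mR'$, so $T$ is a submatrix of $\mR'$ obtained by restricting to a subset of rows and a subset of columns.

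The key step is to observe that the consecutive-ones property is inherited by submatrices. If $\sigma$ is a column ordering of $\mR'$ that makes the 1's consecutive in every row, then restricting $\sigma$ to any subset of columns still leaves the 1's consecutive in every row of any subset of the rows. Hence $T$, being a submatrix of $\mR'$, must itself have the consecutive-ones property.

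But $T$ is by hypothesis one of Tucker's matrices $M_{I}$--$M_{V}$ shown in Figure~\ref{fig:TuckGraphs}, and the defining property of these matrices is precisely that they do not have the consecutive-ones property. This is a direct contradiction, so $Z$ must belong to every Tucker submatrix of $\mR$.

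There is no real obstacle in this argument; the lemma is essentially a restatement of the fact that Tucker's matrices are the \emph{minimal} forbidden configurations together with the observation that the C1P is hereditary. The only thing to be careful about is that a ``Tucker submatrix of $\mR$'' is taken to mean a choice of both rows and columns of $\mR$ whose induced submatrix is equal (up to row and column permutation) to some $M_j$, and so the inherited ordering argument applies without issue.
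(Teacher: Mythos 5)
Your proof is correct: the paper states this lemma without proof, treating it as immediate, and your argument---that a Tucker submatrix avoiding $Z$ would lie entirely within $\mR'$, inherit the consecutive-ones property under column restriction, and thereby contradict the defining fact that Tucker matrices lack that property---is exactly the intended justification. Nothing further is needed.
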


\begin{algo}\label{alg:fiveRows} 
(See Lemma~\ref{lem:fiveRowsCorrect}.)
\begin{tabbing}
init\=ialR\=ows\=($M', k$) \\
\>$M := M'$ \\
\>$i := 1$; \\
\>While $i \leq k+1$ and M has at least $i-1$ rows \\
\>\>   Using Booth and Lueker, find the minimal prefix $(R_1, R_2, \ldots, R_r, Z)$ of rows  \\
\>\>\>       of $M$ that does not have the consecutive-ones property. \\
\>\>   Let $M$ be the matrix whose row sequence is $(Z, R_1, R_2, \ldots, R_r)$. \\
\>return $M$ \\
\end{tabbing}
\end{algo}

\begin{lemma}\label{lem:fiveRowsCorrect}
Suppose Algorithm~\ref{alg:fiveRows} is run with parameter $k$ and a matrix $M'$ that does
not have the consecutive-ones property.  If the returned matrix $M$ has at most
$k$ rows, then these are the rows of every Tucker matrix of $M$.
Otherwise, $M$ fails to have the consecutive-ones property and every Tucker submatrix 
in $M$ has at least $k+1$ rows.
\end{lemma}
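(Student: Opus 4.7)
The plan is to prove, by induction on the iteration count $j$, the invariant that after the $j$th pass of the while loop, every Tucker submatrix of the current matrix $M$ contains the rows $Z_1, Z_2, \ldots, Z_j$ identified in iterations $1$ through $j$. The base case follows directly from Lemma~\ref{lem:lastRowTucker}: iteration $1$ finds the minimal failing prefix $(R_1, \ldots, R_r, Z_1)$ of $M'$ and replaces $M$ with exactly this set of rows reordered, so $Z_1$ lies in every Tucker submatrix of this new $M$. For the inductive step, I observe that the updated $M$ at iteration $j+1$ is a row-subset of the previous $M$, so any Tucker submatrix of the new $M$ is a Tucker submatrix of the old $M$, which by the inductive hypothesis contains $Z_1, \ldots, Z_j$; Lemma~\ref{lem:lastRowTucker} applied to the new minimal failing prefix then supplies $Z_{j+1}$.

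Next I analyze termination by tracking the length of the minimal failing prefix at iteration $j$ against the number $j-1$ of already-identified $Z_i$'s (which, by the reordering rule, occupy the first $j-1$ positions of the current $M$). Because every $Z_i$ sits in every Tucker submatrix and every non-C1P subset contains some Tucker submatrix, any non-C1P subset of the current $M$ must contain all of $Z_1, \ldots, Z_{j-1}$; in particular the minimal failing prefix has length at least $j-1$. Two cases then arise. In Case~A the length strictly exceeds $j-1$, so $Z_j$ is a genuinely new row, the $Z_i$'s remain pairwise distinct, and the updated $M$ retains at least $j$ rows, letting the loop continue. In Case~B the length equals $j-1$, so the minimal failing prefix is exactly $\{Z_1, \ldots, Z_{j-1}\}$, the newly named $Z_j$ coincides with an old $Z_i$, and the updated $M$ has only $j-1$ rows, causing the loop guard ``$M$ has at least $i-1$ rows'' to fail on the next iteration.

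Combining the invariant with the termination analysis yields the two cases of the lemma. If Case~B triggers termination at some iteration $j \le k+1$, the returned $M$ has rows exactly $\{Z_1, \ldots, Z_{j-1}\}$ with $j-1 \le k$ rows; by the invariant every Tucker submatrix of $M$ contains these rows and therefore equals $M$ as a row set. Otherwise the loop completes all $k+1$ iterations in Case~A, producing $k+1$ distinct $Z_i$'s all contained in the returned $M$ (which remains non-C1P since it is a minimal failing prefix), and the invariant forces every Tucker submatrix to contain all $k+1$ of them. The main obstacle is the Case~B analysis: one must verify that when the minimal failing prefix cannot extend past the current $Z$-prefix it must coincide with that prefix exactly, rather than being a shorter non-C1P subset, which in turn rests on the subset reasoning that any non-C1P subset of the current $M$ contains all previously extracted $Z_i$'s.
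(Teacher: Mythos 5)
Your proposal is correct and follows essentially the same route as the paper: the same induction, via Lemma~\ref{lem:lastRowTucker}, showing that every Tucker submatrix of the current $M$ contains all previously extracted $Z$'s (equivalently, the first rows of $M$), together with the observation that $M$ stays non-C1P and that a short Tucker submatrix forces the minimal failing prefix to collapse to exactly those rows, terminating the loop. Your Case~A/Case~B analysis merely spells out (using Tucker's theorem that any non-C1P row set contains a Tucker submatrix) the step the paper states tersely as ``at the end of iteration $i+1$, $M$ will have $i$ rows,'' which is a welcome but not a different argument.
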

\begin{proof}
By induction on $i$, $M$ does not have the consecutive-ones
property at the end of iteration $i$.
Also, by induction on $i$, using Lemma~\ref{lem:lastRowTucker},
at the end of iteration $i$, for every Tucker submatrix $M_T$ of $M$, 
the rows of $M_T$ include the first $i$ rows of $M$.
If $M_T$ has only $i$ rows, then the first $i$ rows of $M$
do not have the consecutive-ones property, so at the end of iteration
$i+1$, $M$ will have $i$ rows.
\end{proof}

We run Algorithm~\ref{alg:fiveRows} for $k = 4$.  If it returns a matrix with $j$ rows, where
$j \leq 4$, it is easy to get a linear time bound to get the columns.
(One way is to generate all $j! \leq 24$ orderings of rows and for each, to check
for the columns of each Tucker matrix of size $j$.)
Otherwise, Algorithm~\ref{alg:fiveRows} returns a matrix $M$ of more than 4 rows.
By Lemma~\ref{lem:fiveRowsCorrect}, $M$
fails to have the consecutive-ones property and every Tucker
submatrix of $M$ has at least five rows.  This excludes any
instances of $M_{IV}$, $M_{V}$ or the anomalous case
of $M_{I}$ on three rows that does not correspond to a chordless cycle.

\subsection{Matrices in which all Tucker submatrices have more than four rows}

\begin{lemma}\label{lem:cycle}
The overlap graphs of $M_I$, $M_{II}$, and $M_{III}$ are simple cycles.
\end{lemma}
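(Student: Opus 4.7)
The plan is to verify the three cases $M_I$, $M_{II}$, $M_{III}$ separately, since each has a slightly different row structure. In each case the rows can be written down as explicit subsets of the column set, and the claim reduces to a mechanical enumeration of overlapping pairs: two rows $R$, $R'$ are adjacent in the overlap graph iff $R \cap R' \neq \emptyset$ and neither of $R \subseteq R'$, $R' \subseteq R$ holds.

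For $M_I$ on $k$ rows, the rows are the cyclic consecutive pairs $R_i = \{i, i{+}1\}$ with indices taken mod $k$. For $k \geq 4$, two such rows share a column exactly when their indices are cyclically adjacent, in which case the shared column is unique and, since both rows have size $2$, neither contains the other. Hence $R_i$ is adjacent only to $R_{i-1}$ and $R_{i+1}$ and the overlap graph is $C_k$. For $k = 3$ the three rows pairwise share one column and pairwise overlap, giving the triangle $C_3$. This is the easy case.

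For $M_{II}$ and $M_{III}$ the bulk of the rows are again consecutive-pair rows of the form $\{i, i{+}1\}$, and among these the overlap adjacencies form a \emph{path} exactly as in $M_I$. The additional structure consists of a small, fixed number of ``long'' rows that, together with the consecutive-pair rows at the endpoints of the path, close it into a cycle. For each long row I would (a) write out its support as an explicit set of columns, (b) identify the two other rows it intersects and verify that neither containment relation holds with them, and (c) check the long row against \emph{every} other row to verify it is disjoint from, or non-containment-related to, all the rest. Step (c) is the one that needs care, because a long row has a support of several columns and could in principle contain one of the short consecutive-pair rows.

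The main obstacle, then, is organizing the casework for the long rows of $M_{II}$ and $M_{III}$: one must confirm that each long row overlaps exactly two other rows (its intended neighbors in the cycle), and that none of the containments $\{i, i{+}1\} \subseteq L$ holds for any long row $L$. Once this is verified, the remaining adjacencies repeat the analysis for $M_I$ and the overlap graph closes into a simple cycle whose length equals the number of rows of the matrix.
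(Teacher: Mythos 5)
Your overall strategy (explicit, case-by-case inspection of the three matrix families) is exactly what this lemma calls for -- the paper itself states it without proof, as something read off from the figure defining Tucker's matrices -- and your $M_I$ case is fine. But the plan as written contains a genuine error in the $M_{II}$ case (and it is the case that carries all the content). In $M_{II}$ the long row consists of all the \emph{interior} columns of the path of consecutive-pair rows plus one private column; for instance, on four rows the matrix is $\{1,2\},\{2,3\},\{3,4\},\{2,3,5\}$, and in general the long row contains every middle row $\{i,i+1\}$. So your claim that one must ``confirm that none of the containments $\{i,i+1\}\subseteq L$ holds for any long row $L$'' is simply false of the matrix you are supposed to be verifying, and if you carried out step (c) as stated, the verification would appear to fail even though the lemma is true.

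The confusion is in the non-adjacency criterion itself: by the definition you correctly quote at the start, the long row fails to be adjacent to a middle row exactly when the two are \emph{disjoint or nested}, whereas you ask to check that the long row is ``disjoint from, or non-containment-related to'' the remaining rows -- the negation of what is needed (a nonempty intersection with no containment \emph{is} an overlap edge, i.e., a chord). The correct check for $M_{II}$ is the opposite of the one you propose: verify that the long row meets the first and last short rows in exactly one column each without containing them (the two cycle edges), and that it \emph{contains} every interior short row (hence no chords); the analogous inspection of the actual row sets in the figure handles $M_{III}$, whose non-path rows likewise avoid chords through containment or disjointness rather than through the absence of containment. With that inversion fixed, the enumeration goes through and matches the paper's (implicit, by-inspection) justification.
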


\begin{definition}\label{def:101}
Suppose $\mR'$ is a set of rows with the consecutive-one property, 
$Q$ is a Q node of its PQ tree, $(X_1, X_2, \ldots, X_k)$ is the ordering of $Q$'s 
children and $Z$ is a row not in $\mR'$. 
Let $X_h, X_i, X_j$ be three children
of $Q$ such that $h < i < j$.
They are a {\em 1-0-1 configuration for $Z$}
if $X_h$ and $X_j$ each contain a 1 of row $Z$
and $X_i$ contains a 0 of row $Z$.  They are a {\em 0-1-0 configuration}
for $Z$ if $X_h$ and $X_j$ each contain a 0 of row $Z$ and $X_i$ contains a 1 of row $Z$.
\end{definition}

\begin{lemma}\label{lem:violation}
If $\mR'$ is a set of rows that has the consecutive-ones property and $Z$ is a
row not in $\mR'$, then $\mR' \cup \{Z\}$ does not have the consecutive-ones
property if the PQ tree of $\mR'$ has a Q node $Q$ such that either:

\begin{enumerate}
\item \label{firstCase} $Q$ has a 1-0-1 configuration for $Z$;
\item\label{secondCase}  $Q$ has a 0-1-0 configuration for $Z$ and $Z$
is not a subset of $Q$.
\end{enumerate}
\end{lemma}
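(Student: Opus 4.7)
The plan is a direct contradiction argument using the fact that the PQ tree parameterizes all consecutive-ones orderings of $\mR'$: in any such ordering, the leaves of $Q$ appear as a contiguous block, and the children $X_1, X_2, \ldots, X_k$ appear in either the given order or its reverse. So, assume for contradiction that $\mR' \cup \{Z\}$ has a consecutive-ones ordering $\sigma$; restricting $\sigma$ to the columns of $\mR'$ yields a consecutive-ones ordering of $\mR'$, and that ordering must be a leaf order of the PQ tree. In particular, within the $Q$-block of $\sigma$, either $X_h$ precedes $X_i$ precedes $X_j$, or $X_j$ precedes $X_i$ precedes $X_h$; either way, every column of $X_i$ lies strictly between some column of $X_h$ and some column of $X_j$ in $\sigma$.

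For Case~\ref{firstCase}, pick witness columns $a \in X_h$ and $c \in X_j$ with $Z(a)=Z(c)=1$, and $b \in X_i$ with $Z(b)=0$; in $\sigma$, $b$ lies between $a$ and $c$, so the 1's of $Z$ are not consecutive in $\sigma$, contradicting that $\sigma$ is a consecutive-ones ordering for $\mR'\cup\{Z\}$.

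For Case~\ref{secondCase}, pick $a \in X_h, c \in X_j$ with $Z(a)=Z(c)=0$, and $b \in X_i$ with $Z(b)=1$; because $Z$ is not a subset of $Q$, there is additionally a column $d$ outside the leaf set of $Q$ with $Z(d)=1$. In $\sigma$, the $Q$-block is contiguous, so $d$ is either entirely before or entirely after all of $a,b,c$. In either arrangement, traveling from position $d$ to position $b$ in $\sigma$ one must cross either $a$ or $c$, so a 0 of $Z$ lies strictly between two 1's of $Z$ in $\sigma$, again contradicting the consecutive-ones property.

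The only subtlety, and the one I would be most careful about, is the second case: one must really use the hypothesis that $Z$ is not a subset of $Q$ to produce the external witness $d$, since without it the 1's of $Z$ could conceivably all lie inside the $Q$-block at the position of $X_i$, and being bracketed by 0's in $X_h$ and $X_j$ would then be consistent with a consecutive-ones row (the 1's of $Z$ would just be a subinterval of $X_i$'s columns). The rest reduces to the standard observation that the columns of an internal node, and the relative order of its children in a $Q$-node, are invariants of every consecutive-ones ordering represented by the PQ tree, which I would invoke without re-proving. \qed
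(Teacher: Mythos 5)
Your proof is correct. Note that the paper itself gives no proof of this lemma: it is stated bare, with only the remark that the test is implicit in Booth and Lueker's algorithm, where it appears as a sufficient but not necessary condition. Your argument supplies the missing elementary verification, and it rests on exactly the right invariant, namely that every consecutive-ones ordering of $\mR'$ is a leaf order represented by the PQ tree, in which the leaf descendants of $Q$ form a contiguous block and the children of $Q$ appear in the prescribed order or its reverse. From this, the 1-0-1 case forces a 0 of $Z$ between two 1's of $Z$, and in the 0-1-0 case the external witness column $d$ (which exists precisely because $Z \not\subseteq Q$) forces a 0 of $Z$, located in $X_h$ or $X_j$, between the 1 of $Z$ in $X_i$ and the 1 of $Z$ at $d$. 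The subtlety you flag is indeed the essential content of case~\ref{secondCase} and is exactly why the lemma carries the hypothesis $Z \not\subseteq Q$; without it the 1's of $Z$ could sit entirely inside $X_i$ and the pattern would be harmless. One cosmetic remark: there is no need to ``restrict $\sigma$ to the columns of $\mR'$,'' since the column set of $\mR' \cup \{Z\}$ is the same as that of $\mR'$; any consecutive-ones ordering of the former is already a consecutive-ones ordering of the latter and hence a leaf order of the PQ tree. Invoking the PQ tree's defining property without re-proving it is appropriate, as the paper introduces the PQ tree with precisely that characterization of the orderings it represents.
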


This test is implicit in Booth and Lueker's algorithm, where it
is a sufficient condition, but not a necessary one.   The following is
a consequence of Lemma~\ref{lem:Meidanis}.

\begin{lemma}\label{lem:violation2}~\cite{McCSODA04}
The conditions of Lemma~\ref{lem:violation} are necessary and sufficient
if the overlap graph of of $\mR'$ is connected.
\end{lemma}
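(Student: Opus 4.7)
The plan is to establish necessity, since Lemma~\ref{lem:violation} already gives sufficiency. Suppose the overlap graph of $\mR'$ is connected and $\mR' \cup \{Z\}$ fails the consecutive-ones property; I must locate a Q node of the PQ tree of $\mR'$ hosting a 1-0-1 configuration for $Z$, or a 0-1-0 configuration with $Z$ not a subset of $Q$. By Lemma~\ref{lem:Meidanis}, connectedness forces a single Q node $Q$, namely the union of all rows of $\mR'$, whose children are the Venn classes $X_1, \ldots, X_k$ (with $k \geq 3$); any columns of the matrix not in $Q$ form the unconstrained class $U$ and appear as siblings of $Q$ under the P-node root. Every consecutive-ones ordering of $\mR'$ therefore lists the $X_i$ in the given order or its reverse, places the columns of $U$ freely before or after the span of $Q$, and permutes columns inside each $X_i$ arbitrarily.

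I argue the contrapositive: assume $Q$ hosts no 1-0-1 for $Z$ and, when $Z \not\subseteq Q$, no 0-1-0 for $Z$ either, and construct a consecutive-ones extension for $Z$. Let $A = \{i : X_i \cap Z \ne \emptyset\}$ and $B = \{i : X_i \not\subseteq Z\}$; the case $A = \emptyset$ is immediate (place $Z \cap U$ contiguously outside $Q$), so assume $A \neq \emptyset$ and set $a = \min A$, $b = \max A$. Absence of a 1-0-1 forces $A = [a,b]$ and $X_i \subseteq Z$ for every strictly middle index $i$ (otherwise $X_a, X_i, X_b$ would form a 1-0-1). If $Z \subseteq Q$ then I am done: order the children as $(X_1, \ldots, X_k)$, push $Z \cap X_a$ flush right inside $X_a$, leave the middle children as solid 1-blocks, and push $Z \cap X_b$ flush left inside $X_b$, producing consecutive 1s for $Z$.

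When $Z \not\subseteq Q$, the 1s of $Z$ in $U$ must abut $Q$'s 1-range in any candidate ordering. If $a > 1$ and $b < k$ then $X_1, X_a, X_k$ is a 0-1-0, contradicting the assumption, so after possibly reversing $Q$ I may assume $b = k$. A second appeal to the no-0-1-0 hypothesis removes the last obstruction: if $X_k \not\subseteq Z$ then $X_1, X_a, X_k$ is a 0-1-0 when $a > 1$, and $X_1, X_2, X_k$ is a 0-1-0 when $a = 1$ and $X_1 \not\subseteq Z$ (using $k \geq 3$). Thus either $X_k \subseteq Z$, in which case $Z \cap U$ fits flush right of $X_k$, or $a = 1$ with $X_1 \subseteq Z$, which permits the symmetric placement on the left. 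The main obstacle is to identify exactly which endpoint-child behaviors prevent the unconstrained 1s of $Z$ from landing adjacent to $Q$ and to match each such behavior to a concrete 0-1-0 triple; connectedness of the overlap graph is what confines this triple to a single Q node.
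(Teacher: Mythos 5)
Your overall strategy---using Lemma~\ref{lem:Meidanis} to reduce, via connectedness, to a single Q node $Q$ whose children are the Venn classes of $\mR'$, and proving necessity by contraposition, i.e., exhibiting a consecutive-ones ordering of $\mR' \cup \{Z\}$ whenever neither condition of Lemma~\ref{lem:violation} holds---is sound, and it is essentially the intended derivation (the paper gives no proof of this lemma, deferring to~\cite{McCSODA04} and calling it a consequence of Lemma~\ref{lem:Meidanis}). However, there is a concrete hole in your endgame case analysis. After normalizing to $b=k$, you claim that $X_k \not\subseteq Z$ forces a 0-1-0 configuration: $(X_1, X_a, X_k)$ when $a>1$, or $(X_1, X_2, X_k)$ when $a=1$ and $X_1 \not\subseteq Z$. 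The first triple requires $a<k$, and you never exclude $a=k$. If all the 1's of $Z$ inside $Q$ lie in the single extreme child $X_k$ (that is, $a=b=k$), with $Z \cap X_k \subsetneq X_k$ and $Z \not\subseteq Q$, then no 0-1-0 (and no 1-0-1) need exist at all: take $R_1=\{1,2\}$, $R_2=\{2,3,4\}$, $Z=\{4,5\}$, where the children of $Q$ are $\{1\},\{2\},\{3,4\}$ and neither condition of Lemma~\ref{lem:violation} holds. In that subcase your asserted contradiction evaporates, your concluding dichotomy (``either $X_k \subseteq Z$, or $a=1$ with $X_1 \subseteq Z$'') is false, and your proof exhibits no ordering.

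The repair is immediate: when $a=b$ there are no middle children to constrain, so permute within $X_k$ to put $Z \cap X_k$ flush right and place $Z \cap U$ immediately to its right; every row of $\mR'$ is a union of consecutively ordered children and stays consecutive, and $Z$ becomes consecutive, whether or not $X_k \subseteq Z$. With that subcase added, the contrapositive construction is complete and correct. Two smaller points: your assertion that connectedness yields a Q node with $k \geq 3$ children tacitly assumes $\mR'$ has at least two rows; when it has at most one row there is no Q node, but necessity is vacuous there because any matrix with at most two rows has the consecutive-ones property, and this deserves a sentence. Also, as you note, sufficiency needs no connectedness and is exactly Lemma~\ref{lem:violation}, so only necessity was at stake.
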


\begin{lemma}\label{lem:corral}
If a matrix fails to have the consecutive-ones property and
has no Tucker submatrix with fewer than five rows, then when Algorithm~\ref{alg:fiveRows}
is run on it with $k=4$, at the end of one of the five iterations of its loop, the PQ
tree of $\mR' = \{R_1, R_2, \ldots, R_r\}$ will have a Q node $Q$ with the following
properties:

\begin{itemize}
\item $Q$ has a 1-0-1 configuration $(X_h, X_i, X_j)$ for $Z$;

\item There exist $A,B \in \mR'$ that are members of the
component of the overlap graph on $\mR'$ whose union is $Q$, and such that $A$
contains $X_h$ and is disjoint from $X_i$ and $X_j$, and $B$ contains $X_j$
and is disjoint from $X_h$ and $X_i$.
\end{itemize}
\end{lemma}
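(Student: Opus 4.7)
The plan is to show that at the end of \emph{every} iteration the required $Q$-node is present, so in particular at the end of at least one of the five. Fix an iteration, let $\mR'$ be the current maximal consecutive-ones prefix produced by Booth and Lueker, and let $Z$ be the obstructing row. By Lemma~\ref{lem:lastRowTucker} every Tucker submatrix of $\mR' \cup \{Z\}$ contains $Z$, and by Lemma~\ref{lem:fiveRowsCorrect} together with the hypothesis, every Tucker submatrix of the current $M$ (hence of $\mR' \cup \{Z\}$) has at least five rows. So there is a Tucker submatrix $T = \{Z, R_{a_1}, \ldots, R_{a_t}\} \subseteq \mR' \cup \{Z\}$ with $t \geq 4$. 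By Lemma~\ref{lem:cycle}, the overlap graph of $T$ is a simple cycle; deleting $Z$ leaves a path on $R_{a_1}, \ldots, R_{a_t}$, which is connected in the overlap graph of $\mR'$. Hence these rows all belong to a single connected component $\mC$ of that overlap graph, and by Lemma~\ref{lem:Meidanis}, $\mC$ corresponds to a $Q$-node $Q$ of the PQ-tree of $\mR'$ whose children are the Venn classes of $\mC$.

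The next step is to translate the cycle into the geometry of $Q$. Every $R \in \mC$ is a subset of $Q$, so it occupies a contiguous interval $[l(R), r(R)]$ of $Q$'s children. Since rows of $M_I$, $M_{II}$, $M_{III}$ are pairwise incomparable under $\subseteq$, non-adjacent rows of the cycle are in fact \emph{disjoint}, not merely non-overlapping. Combined with overlap of adjacent rows, a short induction along the path (after orienting it so that $R_{a_1}$ is leftmost) yields a staircase $l_1 < l_2 < \cdots < l_t$ and $r_1 < r_2 < \cdots < r_t$, satisfying $l_{i+1} \leq r_i$ (adjacent overlap) and $r_{i-1} < l_{i+1}$ (two-apart disjointness). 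Moreover, in $T$ the row $Z$ overlaps only $R_{a_1}$ and $R_{a_t}$ and is disjoint from $R_{a_2}, \ldots, R_{a_{t-1}}$.

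Put $A := R_{a_1}$ and $B := R_{a_t}$. Choose $X_h$ to be any child of $Q$ at a position $h \in [l_1, r_1]$ containing a column of $Z$ (one exists since $Z \cap A \neq \emptyset$), choose $X_j$ similarly at position $j \in [l_t, r_t]$, and choose $X_i$ at any position $i \in (r_1, r_2]$ (nonempty since $r_2 > r_1$). Because children of $Q$ are Venn classes of $\mC$, each child is either contained in or disjoint from every $R \in \mC$; therefore $X_h \subseteq A$, $X_j \subseteq B$, and $X_i \subseteq R_{a_2}$, whence $X_i \cap A = X_i \cap B = \emptyset$ (disjointness of $R_{a_2}$ from $R_{a_1}$ and $R_{a_t}$) and $X_i \cap Z = \emptyset$ (disjointness of $R_{a_2}$ from $Z$). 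The staircase inequalities give $h \leq r_1 < i \leq r_2 < l_t \leq j$, so $(X_h, X_i, X_j)$ is the desired 1-0-1 configuration for $Z$ at $Q$, and $A, B$ satisfy the stated containment and disjointness.

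The main obstacle I anticipate is the upgrade from non-overlap to disjointness for non-adjacent rows of the cycle: Lemma~\ref{lem:cycle} alone does not preclude subset relations between rows. One must inspect each of $M_I$, $M_{II}$, $M_{III}$ to verify that their rows are pairwise incomparable under $\subseteq$, after which the staircase and the subsequent interval bookkeeping inside $Q$ are essentially mechanical.
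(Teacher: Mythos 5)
Your plan hinges on two claims that are both false, and they fail together. First, the rows of $M_{II}$ and $M_{III}$ are \emph{not} pairwise incomparable under $\subseteq$: in $M_{II}$ on four rows, for instance, the rows are $\{1,2\},\{2,3\},\{3,4\}$ and the long row $\{2,3,5\}$, and $\{2,3\}$ is properly contained in the long row; in general the long row of $M_{II_k}$ and of $M_{III_k}$ contains all the middle short rows. So ``non-adjacent in the overlap graph'' does not upgrade to ``disjoint,'' the row $Z$ need not be disjoint from $R_{a_2},\ldots,R_{a_{t-1}}$ (it may contain them or be contained in the long row), and the staircase inside $Q$ together with the conclusion $X_i\cap Z=\emptyset$ collapses. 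Second, and more fundamentally, your strengthened statement that the required $Q$ node exists at the end of \emph{every} iteration is false. Run the algorithm on $M_{II}$ with five rows, listed with the long row last: in the first iteration $\mR'$ is the four short rows, whose single Q node has children $\{1\},\dots,\{5\}$ in a forced order, and $Z$ (the long row) has its 1's on the consecutive middle children and 0's at the two extremes, so $Q$ admits no 1-0-1 configuration for $Z$ at all (the violation is of the 0-1-0 type via the private column outside $Q$). This is exactly the situation the five iterations are designed to escape.

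What is missing is the pigeonhole argument that justifies running the loop $k+1=5$ times: for a fixed Tucker submatrix $M_T$ (which has at least five rows), at most four choices of which of its rows plays the role of the last row $Z$ fail to yield suitable $A$ and $B$ (the rows labelled $0,1,k-2,k-1$ in Figure~\ref{fig:TuckRotations}); the five iterations force five \emph{distinct} rows of $M_T$ to serve as $Z$ (by Lemmas~\ref{lem:lastRowTucker} and~\ref{lem:fiveRowsCorrect} the previous $Z$'s lie in every Tucker submatrix and in the current consecutive-ones prefix), so some iteration has a ``middle'' row of $M_T$ as $Z$, and in that iteration its two neighbors on the overlap cycle serve as $A$ and $B$ --- the same choice you make, but available only in that good iteration. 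Your reduction to a single Q node via Lemmas~\ref{lem:cycle} and~\ref{lem:Meidanis} matches the paper; the per-iteration geometric argument after that point cannot be repaired without restricting which row of $M_T$ is $Z$, and then the counting over the five iterations is indispensable.
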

\begin{proof}
If $\mT$ is the rows of $M$ that contain a Tucker submatrix $M_T$,
at the end of an iteration of the loop of Algorithm~\ref{alg:fiveRows}, $Z \in \mT$
by by Lemma~\ref{lem:lastRowTucker}.
By Lemma~\ref{lem:cycle}, the overlap graph of $\mT' = \mT \setminus \{Z\}$ is
connected, so $\mT'$ is a subset of a component of the overlap graph
of $\mR'$, which gives rise to a Q node $Q$
of the PQ tree of $\mR'$, by Lemma~\ref{lem:Meidanis}.  Since the children of $Q$ are
the Venn classes of the component, no two Venn classes of $\mT'$, hence
no two columns of $M_T$, can lie in the same child of the Q node.

For each choice of a last row of a Tucker matrix on at least five rows,
Figure~\ref{fig:TuckRotations} gives the possible 
orderings imposed on the last row by a consecutive-ones ordering of $\mT'$, which is
unique up to reversal, by Lemmas~\ref{lem:cycle} and~\ref{lem:Meidanis}.
In each case, if row $i \not\in \{0, 1, k-2, k-1\}$ is chosen to go last,
rows $i-1$ and $i+1$ satisfy the requirements of $A$ and $B$.
$M_T$ has at least five rows and no row of $M_T$ is 
contained in $Z$ more than once in the five iterations, so in at least
one of the iterations a row $i \not\in \{0,1,k-2,k-1\}$ will go last.
\end{proof}

\begin{figure}
\centerline{\includegraphics[]{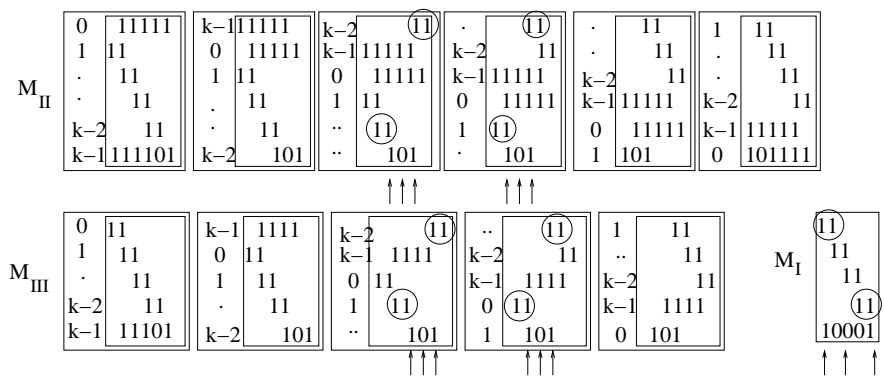}}
\caption{Consecutive-ones orderings of all but the last row of $M_I$, $M_{II}$ and 
$M_{III}$ for different choices of last the last row.  For all but at most four
choices of the last row, there exists a 1-0-1 configuration and rows
$A$ and $B$ satisfying Lemma~\ref{lem:corral}}\label{fig:TuckRotations}
\end{figure}

The correctness and linear time bound for the following are the key results of this
section:

\begin{algo}\label{alg:easyCase}
Find the rows of a Tucker submatrix when every Tucker submatrix has at least five rows
\begin{tabbing}
fin\=dRo\=ws\=($M$) \\
\> Run initialRows($M,4$) (Algorithm~\ref{alg:fiveRows}) to find $\mR'$, $Z$, $A$, $B$ \\
\>\> satisfying the requirements of Lemma~\ref{lem:corral} \\
\> Let $P$ be a shortest path in the overlap graph of $\mR'$ from $A$ to $B$ \\
\> Let $P_1$ be a minimal prefix of $P$ such that the union of $\{Z\}$ and the \\
\>\> set $\mP_1$ of rows of $P_1$ does not have the consecutive-ones property. \\
\> Let $P_2$ be a minimal suffix of $P_1$ such that the union of $\{Z\}$ and the \\
\>\> set $\mP_2$ of rows of $P_2$ does not have the consecutive-ones property. \\
\> Return $\mP_2 \cup \{Z\}$. 
\end{tabbing}
\end{algo}

\begin{lemma}\label{lem:W}
If $M$ does not have the consecutive-ones property and every Tucker matrix of $M$ has at least five rows,
then Algorithm~\ref{alg:easyCase} returns the set of rows of a Tucker matrix of $M$.
\end{lemma}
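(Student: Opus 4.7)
The plan is to show that $\mP_2\cup\{Z\}$ is the row-set of some Tucker submatrix of $M$.  Under the hypothesis that every Tucker submatrix has at least five rows, the only possibilities are $M_I$, $M_{II}$, and $M_{III}$, whose overlap graphs are simple cycles by Lemma~\ref{lem:cycle}.  The goal is therefore to exhibit, on the rows $\mP_2\cup\{Z\}$ together with a suitable choice of columns, a cycle of overlapping rows matching one of these three types.

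First I would verify that $\mP\cup\{Z\}$ fails the consecutive-ones property, so that $P_1$ and hence $P_2$ are well-defined, where $\mP$ is the row-set of the shortest path $P$.  Every row of $\mP$ is a contiguous union of children of the Q-node $Q$ furnished by Lemma~\ref{lem:corral}; because consecutive rows of the path overlap, the path from $A$ (containing $X_h$, disjoint from $X_i,X_j$) to $B$ (containing $X_j$, disjoint from $X_h,X_i$) must pass through some row containing $X_i$.  Hence all three of $X_h,X_i,X_j$ lie in $\bigcup\mP$, and since $X_h\subseteq A$ with $X_i\cap A=\emptyset$ and symmetrically for $X_j$ versus $B$, no two of $X_h,X_i,X_j$ share a Venn class of $\mP$; they appear in three distinct children of the Q-node of $\mP$'s PQ tree, in the order inherited from $Q$.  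So $Z$ has a 1-0-1 configuration at that Q-node, and Lemma~\ref{lem:violation} gives the CC failure.

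Next, a subpath of a shortest path is itself a shortest path, hence induced, so the rows $T_1,\ldots,T_m$ of $\mP_2$ form an induced path in the overlap graph of $\mR'$; their overlap graph is connected, so Lemma~\ref{lem:violation2} yields a Q-node $Q^*$ of the PQ tree of $\mP_2$ with a 1-0-1 or 0-1-0 configuration for $Z$.  The main content of the proof is then the double minimality: $(\mP_2\setminus\{T_m\})\cup\{Z\}$ is a submatrix of the CC matrix $(\mP_1\setminus\{T_m\})\cup\{Z\}$ (by minimality of $P_1$), and $(\mP_2\setminus\{T_1\})\cup\{Z\}$ is CC (by minimality of $P_2$), so both $T_1$ and $T_m$ are critical for the CC failure of $\mP_2\cup\{Z\}$.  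From this criticality I would deduce, via PQ-tree reasoning, that restricted to columns witnessing the configuration at $Q^*$ both $T_1$ and $T_m$ overlap $Z$, while the induced-path structure of $\mP_2$ together with the double minimality prevents interior rows from chording to $Z$.  The overlap graph of $\mP_2\cup\{Z\}$ on those columns is then the induced cycle $T_1T_2\cdots T_mZ$, and Lemma~\ref{lem:cycle} together with a short case analysis against Figure~\ref{fig:TuckRotations} identifies the Tucker type.  I expect the main obstacle to be this criticality-to-overlap step, since failure of consecutive-ones does not immediately yield overlap-graph information, and one must carefully track how the PQ tree of the remaining rows accommodates, or fails to accommodate, each boundary row.
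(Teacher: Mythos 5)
Your first half is sound and matches the paper: you correctly argue that every row of $\mP$ contains $X_i$ or misses it, that the union of the overlapping rows along $P$ covers everything between $A$ and $B$, so some row of $\mP$ contains $X_i$, that $X_h$, $X_i$, $X_j$ therefore land in three distinct constrained Venn classes of $\mP$, and that Lemma~\ref{lem:violation} then shows $\mP \cup \{Z\}$ is not consecutive-ones orderable, so $P_1$ and $P_2$ exist.

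The second half, however, has a genuine gap, and it is exactly the step you flag yourself: you never establish that the Tucker submatrix hiding inside $\mP_2 \cup \{Z\}$ uses \emph{all} of these rows, and your proposed route (exhibit an overlap cycle $T_1T_2\cdots T_mZ$ on columns witnessing the configuration at $Q^*$, then match against Figure~\ref{fig:TuckRotations}) does not obviously get there. In the full matrix $Z$ may well overlap interior rows of $P_2$; the cycle structure only appears after a suitable column deletion, and even producing a chordless overlap cycle on some column set does not by itself certify that the row set is exactly that of a Tucker submatrix --- you would still need a minimality argument, which is the very content of the lemma. The paper avoids the ``criticality-to-overlap'' difficulty entirely by arguing about an \emph{arbitrary} Tucker submatrix of $\mP_2 \cup \{Z\}$: one exists since the matrix is not consecutive-ones orderable; it must contain $Z$ by Lemma~\ref{lem:lastRowTucker}; its remaining rows $\mR''$ have a connected overlap graph (Lemma~\ref{lem:cycle}, a cycle minus one vertex), and since the shortest path $P_2$ is chordless, $\mR''$ is a subpath of $P_2$. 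If $\mR''$ were a proper subset, it would omit the first or the last row of $P_2$, hence be contained in the rows of $P_1$ minus its last row or in the rows of $P_2$ minus its first row; by the minimality of $P_1$ and $P_2$ each of these, together with $Z$, has the consecutive-ones property --- a contradiction. That short argument is what your proposal is missing; without it (or a completed version of your cycle-construction and case analysis, which you acknowledge you do not have), the conclusion that the algorithm returns the rows of a Tucker matrix is not proved.
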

\begin{proof}
Since $A$ and $B$ lie in the same 
component of the overlap graph of $\mR'$, $P$ exists.
Since $\mR'$ has the consecutive-ones property, so does $\mP$.
Because $\mP$ has a connected overlap graph, $P$, $\bigcup \mP$ is a single
Q node of the PQ tree of $\mP$, by Lemma~\ref{lem:Meidanis}.
Because of $A$ and $B$, $X_h$, $X_i$, and $X_j$ are contained in
distinct Venn classes of $\mP$, and the ones containing $X_h$ and $X_j$ are constrained.
Since $\bigcup \mP$ is consecutive, it must have
a row that contains the $X_i$, hence the Venn class of $\mP$ containing $X_i$ is also constrained.
Therefore, $\mP \cup \{Z\}$ does not have the consecutive-ones property
by Lemma~\ref{lem:violation}, and $P_1$ and $P_2$ exist.
By Lemma~\ref{lem:lastRowTucker},
all Tucker matrices in $\mR' \cup \{Z\}$
contain $Z$, so this applies also to $\mP_2 \cup \{Z\}$.  

Suppose there is a proper subset $\mR''$
of the rows on $P_2$ such that $\mR'' \cup \{Z\}$ contains a Tucker
matrix.  The overlap graph of $\mR''$ is connected, by Lemma~\ref{lem:cycle}.
Since $P_2$ is a shortest path, it is a chordless path,
so $\mR''$ is a subpath of $P_2$ by Lemma~\ref{lem:cycle}.
Let $\mR'_1$ be the rows on $P_1$, excluding the last row on $P_1$.
Let $\mR'_2$ be the rows of $P_2$, excluding the first row on $P_2$.
By the minimality of $P_1$ and $P_2$, $\mR'_1 \cup \{Z\}$ and $\mR'_2 \cup \{Z\}$
have the consecutive-ones
property.  Since $\mR''$ is a subpath of $P_2$,
$\mR'' \subseteq \mR'_1$ or $\mR'' \subseteq \mR'_2$, so $\mR'' \cup \{Z\}$ 
has the consecutive-ones property, contradicting our assumption that it does not.
Therefore, $\mP_2 \cup \{Z\}$ is the set of rows of a Tucker matrix.
\end{proof}

\begin{figure}
\centerline{\includegraphics[]{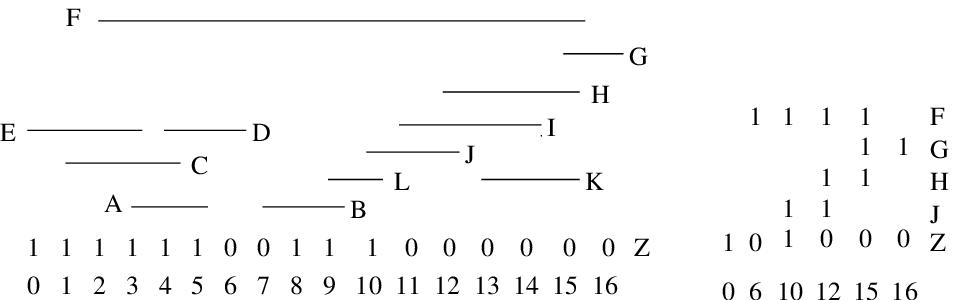}}
\caption{Example of finding a minimal set of rows that does not have
the consecutive-ones property}\label{fig:PP1P2}
\end{figure}

Figure~\ref{fig:PP1P2} gives an example on which we illustrate some
implementation details.    $Z$ is given by the 1's and
0's above the column numbers in the figure on the left, and $\mR'$ is depicted by the intervals.
The rows labeled $A$ and $B$ satisfy the requirements of $A$ and $B$ for Lemma~\ref{lem:corral}, and
$P = (A,E,F,G,H,J,L,B)$ is a shortest path from $A$ to $B$ in the overlap graph
of $\mR'$, found using the BFS algorithm of Section~\ref{sect:BFS}.

Using Booth and Lueker's terminology, we
maintain labels on each class indicating whether it is {\em full}
(contains only 1's of $Z$), {\em empty} (contains only 0's of $Z$), or {\em partial}
(contains both 1's and 0's of $Z$).
The minimal prefix $P_1$ of $P$ whose rows, together with $Z$, do not have
the consecutive-ones property, is $(A,E,F,G,H,J)$.  This is detected as follows.
It is easy to verify that its sequence of constrained Venn classes is
$(\{0,1\}$,$\{2\}$,$\{3\}$, $\{4,5\}$, $\{6-9\}, \{10,11\}$, $\{12\}$, 
$\{13,14\}$, $\{15\}$, $\{16\})$, and their full/partial/empty labels $(F,F,F,F,P,P,E,E,E,E)$, 
respectively.  Selecting a 1 from a full class, a 0 from the first partial class and a 1
from the second partial class gives a 1-0-1 configuration satisfying Lemma~\ref{lem:violation}.
It is the minimal such prefix.
A smaller prefix, $(A,E,F)$ has a 0-1-0 configuration, but it does not
satisfy Lemma~\ref{lem:violation} because $Z \subset A \cup E \cup F$.

The minimal suffix $P_2$ of $P_1$ that satisfies Lemma~\ref{lem:violation} is $(F,G,H,J)$, which
is found in the same way by working on the reverse of $P_1$.  Its sequence
of constrained Venn classes are $(\{2-9\}$, $\{10,11\}$, $\{12-14\}$, $\{15\}$, $\{16\})$,
labeled $(P,P,E, E, E)$, respectively.  Selecting a 0 from the first partial class,
a 1 from the next, and a 0 from an empty class gives a 0-1-0 configuration.  It satisfies
condition~\ref{secondCase} of the lemma, because the unconstrained class, $\{0,1\}$
is partial, hence $Z \not\subseteq F \cup G \cup H \cup J$.

Therefore, $\{F,G,H,J,Z\}$ is the set of rows of a Tucker submatrix.  A minimal set of columns 
that illustrates that it satisfies the lemma is $\{0,6,10,12,15,16\}$.  
On the righthand side of Figure~\ref{fig:PP1P2} is the resulting Tucker matrix, which matches
the final configuration in the sequence for $M_{III}$ in 
Figure~\ref{fig:TuckRotations}.

This example shows that the key to finding $P_1$ and $P_2$ is maintaining
the sequence of constrained Venn classes and their full/partial/empty labels as rows are added in
the order in which they occur on $P$ or on the reverse of $P_1$.
Since they are added in an order such that every prefix of the order has a connected
overlap graph, the sequence is uniquely constrained after each row is added,
by Lemma~\ref{lem:Meidanis}.  When a row $R_i$ is added, if it overlaps a constrained Venn class
$X$, $X$ must be replaced in the sequence with two Venn classes, 
$(X \setminus R_i, X \cap R_i)$ or with $(X \cap R_i, X \setminus R_i)$, whichever
is required to maintain consecutiveness of $R_i$.  If $R_i$ intersects the unconstrained
class, $S$, then $R_i \cap S$ must be added at one extreme end of the sequence, whichever
maintains consecutiveness of $R_i$.  
Details are given in~\cite{McCSODA04}.

The difference between this algorithm and that of~\cite{McCSODA04} (and Booth and Lueker)
is that, instead
of testing at each iteration whether the next row $R_i$ can be added to those
considered so far without undermining the consecutive-ones property, it
must repeatedly perform this test on the fixed row $Z$ after each row $R_i$
is added.  We already know that $R_i$ can be added, since $\mR'$ has the consecutive-ones
property.
Like Booth and Lueker, the previous algorithm of~\cite{McCSODA04} applies the full/partial/empty labels
for $R_i$ to facilitate the test, in $O(|R_i|)$ time, and then removes them before
considering the next row $R_{i+1}$.  
Though we must perform the test on the fixed row $Z$ at each iteration, instead of on $R_i$, we
must do it $O(|R_i|)$ time, not $O(|Z|)$ time, in order to retain the linear time bound.
To do this, we leave the full/partial/empty
labelings for $Z$ from one iteration to the next, so that we only have to update them,
using $R_i$, rather than re-creating them each time a new row is considered.

To facilitate this, we keep updated labels $c(X)$ and 
$n(X)$ on each Venn class $X$, where $c(X)$ denotes the cardinality
of $X$ and $n(X)$ is the number of elements of $Z$ in $X$.  Labels only need to be updated
when a Venn class is split.  
It is split into $X \cap R_i$ and $X \setminus R_i$.   We may
find $c(X \cap R_i)$ and $n(X \cap R_i)$ by counting them directly, since there are
$O(|R_i|)$ of these elements.  The classes are implemented with doubly-linked lists,
and these sets are removed from the list for $X$, leaving it to represent $X \setminus R_i$.
Subtracting $c(X \cap R_i)$ and $n(X \cap R_i)$ from the old labels $c(X)$ and $n(X)$
gives the updated labels for $c(X \setminus R_i)$ and $n(X \setminus R_i)$ in $O(1)$ time.
Each of the new classes is full if its $c()$ and $n()$ labels are equal,
empty if its $n()$ label is 0, and partial otherwise.

To evaluate whether one of the conditions of Lemma~\ref{lem:violation} holds, it is easy to see
that it suffices to keep track of {\em transition pairs},
which are consecutive pairs such that one contains a 0 and one contains a 1.
This happens when their full/partial/empty labels are unequal, or else both
partial.  When a new transition pair forms, we have touched at least one member
of the pair within our $O(|R_i|)$ operations, so keeping track of these
does not affect this time bound.

Since finding $P$ takes linear time by the BFS of Section~\ref{sect:BFS}, it
remains only to bound the time required to find the first step,
finding the elements $A$ and $B$ of Lemma~\ref{lem:corral}.   This is
much more straightforward, since we apply it once for each iteration of the loop
of Algorithm~\ref{alg:fiveRows}, hence we can afford to take $\Theta(size(M))$
time for the test.  We can apply the entire set of full/partial/empty 
labels for $Z$ to the PQ tree within this bound, by working from the leaves
to the root.

For each Q node $Q$, the members of the overlap component whose union is $Q$
are unions of more than one and fewer than all of its children.  How to find them
in linear time for all Q nodes has been described previously, for example in~\cite{LB79}.
We find the rows of the overlap component that contain a Venn child of $Q$
that is labeled full or partial (a ``1'').  Out of all such rows, let $A'$ be the one with a
leftmost right endpoint, and let $B'$ be the one with the rightmost
left endpoint.  By a simple greedy swapping argument, 
the overlap component giving rise to $Q$ contains an $A$ and a $B$ satisfying
Lemma~\ref{lem:corral} if and only if $A'$ and $B'$ satisfy it, which happens
if and only if there is a child between
the right endpoint of $A'$ and the left endpoint of $B'$ that is labeled
partial or empty (a ``0'').
This can also clearly be implemented so that the time bound over all 
Q nodes takes $O(size(M))$ time. 

Once the set $\mP_2 \cup \{Z\}$ of rows of a Tucker matrix have been found, it remains to find
the columns.  This is a set of columns, the removal of any one of which
would undermine the conditions of Lemma~\ref{lem:violation}, which are satisfied
initially by $\mP_2 \cup \{Z\}$.  Deletion of a column undermines the lemma if and only if
does one of the following:

\begin{enumerate}
\item It disconnects the path in overlap graph on $\mR$;
\item It undermines the only remaining 1-0-1 configuration, or 0-1-0 configuration with a 1 in the
unconstrained class.
\end{enumerate}

The second test is elementary and omitted because of space constraints.
For the first test, recall that $P_2 = (R_1, R_2, \ldots, R_k)$ is a chordless
path in the overlap graph.
Let $\mA = \{R_1 \setminus R_2\} \cup \{X| X = R_i \cap R_{i+1}$ for $i \in \{1,2, \ldots, k-1\}\}$
$ \cup \{Y| Y = R_{i+1} \setminus R_i$ for $i \in \{1,2, \ldots, k-1\}\}$.
Each element of $\mA$ is consecutive-ones ordered,
the sum of cardinalities of sets in $\mA$ is $O(size(M))$.
The overlap graph remains connected if and only if every member of $\mA$ contains
at least one retained column.  We give each column a list of members of $\mA$ it is contained in and keep
a counter on each element of $\mA$ indicating the number of remaining columns it contains.
When removing a column $C$, the counters can be updated by decrementing the counters
of members of $\mA$ in its list.  A column cannot be removed if removing it would decrement
a counter to 0.

\section{Finding a Lekkerkerker-Boland Subgraph}

Tucker observed that the smallest graphs whose clique matrices
contain a Tucker matrix must be exactly the LB
graphs~\cite{Tuck72}.  However, making use of this to find an
LB subgraph is not as straightforward as it appears.
It is easy to believe from Figure~\ref{fig:LBGraphs} that 
that the rows of every Tucker submatrix in a clique matrix
can be extended to a clique matrix 
of an LB subgraph by including at most three additional rows.
The rows of the result would therefore identify the vertices that induce
an LB subgraph.

That this reasoning is flawed
is illustrated by Figure~\ref{fig:LBExtension}.  Neither does it follow
from the fact that no column of the clique matrix is a subset of any other;
we discovered the example in the figure when trying to prove it using only this fact.
Therefore, though the conclusion is true in the case of chordal graphs, this
requires proof.

\begin{figure}
\centerline{\includegraphics[]{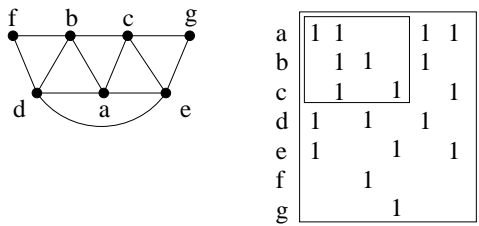}}
\caption{A Tucker matrix, $M_{III}$,  in the clique matrix of a graph $G$.
The only LB subgraph is the chordless
cycle $(b,c,e,d)$, which does not contain row $a$ of the $M_{III}$.}\label{fig:LBExtension}
\end{figure}

If $G$ is not chordal, we may return a $G_{III}$ by the algorithm of~\cite{TarjYan85}.
Henceforth, we may assume that the graph is chordal.  
A {\em clique tree} of a chordal graph is a tree that has one node
for each maximal clique, and with the property that for each
vertex $v$ of $G$, the cliques that contain $v$ induce a connected
subtree.
Every chordal graph has a clique tree, see for example~\cite{Gol80}.
A vertex is {\em simplicial} if its neighbors induce a complete subgraph.
The following are immediate from results that appear there:

\begin{lemma}\label{lem:leafDeletion}
Let $T$ be a clique tree for a chordal graph $G$, and let $K$ be a leaf.  Then $K$ contains
a simplicial vertex of $G$.  Let $S$ be the simplicial vertices of $K$, and
let $T'$ be the result of deleting leaf $K$ from $T$. 
Deleting $S$ from $G$ yields an induced subgraph that has
$T'$ as a clique tree.
\end{lemma}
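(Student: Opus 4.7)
The plan is to identify $S$ explicitly as $K \setminus K'$, where $K'$ is the unique neighbor of $K$ in $T$, and then read both conclusions off this description. The only non-trivial ingredient is the clique-tree axiom (cliques containing a fixed vertex induce a subtree) plus the fact that distinct maximal cliques are incomparable.

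First I would argue $S = K \setminus K'$ in both directions. For $\subseteq$: if $v \in K$ is simplicial, I claim $v \notin K'$. Otherwise, $v$ is adjacent in $G$ to every element of $(K \cup K') \setminus \{v\}$, so simpliciality forces $(K \cup K') \setminus \{v\}$ to be a clique, hence $K \cup K'$ is a clique, contradicting maximality of $K$ (since $K' \not\subseteq K$). For $\supseteq$: if $v \in K \setminus K'$, the subtree of cliques containing $v$ is a connected subtree of $T$ that includes $K$ but not $K'$, and since $K'$ separates $K$ from every other node of $T$, this subtree is $\{K\}$. Thus $v$ belongs only to the maximal clique $K$, so every edge at $v$ is recorded inside $K$, and $N(v) = K \setminus \{v\}$ is a clique. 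Nonemptiness of $K \setminus K'$ (hence of $S$) is the incomparability of distinct maximal cliques.

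Next I would verify that $T' := T - K$ is a clique tree of $G' := G - S$. The nodes of $T'$ are all maximal cliques of $G$ except $K$; each such $L$ satisfies $L \cap S = \emptyset$ (vertices of $S$ lie only in $K$), so $L$ is a clique of $G'$, and maximality is preserved since any proper extension in $G'$ would also extend it in $G$. Conversely, a maximal clique $M$ of $G'$ extends to some maximal clique $L$ of $G$; if $L = K$ then $M \subseteq K \cap V(G') = K \cap K' \subsetneq K'$, where the strict inclusion uses $K' \not\subseteq K$, contradicting maximality of $M$ in $G'$ (since $K'$ remains a clique of $G'$). Hence $L \neq K$ and $M = L \in T'$. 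The subtree axiom is then immediate: for $v \in V(G')$, either $v \notin K$ and the $T$-subtree for $v$ is untouched, or $v \in K \cap K'$ (since $v \notin S$), in which case removing the leaf $K$ from the subtree leaves $K'$ inside, preserving connectivity.

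The main obstacle is the second half of the maximal-clique matching: one must rule out $L = K$ above, and this is exactly where the seemingly innocuous fact $K \cap K' \subsetneq K'$ is used. Everything else is bookkeeping on the subtree axiom, but this step is where the statement would fail if $K$ were not a leaf, so it deserves to be highlighted in the write-up.
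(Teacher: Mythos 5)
Your proof is correct, and it supplies in full what the paper leaves to citation: the paper states this lemma as ``immediate'' from standard clique-tree facts (from~\cite{Gol80}) and gives no argument of its own, while your identification $S = K \setminus K'$ (via the subtree axiom forcing a vertex of $K\setminus K'$ to lie in no other maximal clique) and the leaf-removal bookkeeping is precisely the standard derivation of those facts, including the one genuinely load-bearing step you flag, namely ruling out $L=K$ via $K\cap K' \subsetneq K'$. The only cosmetic caveat is the degenerate case in which $T$ has a single node, so no neighbor $K'$ exists; there $G$ is complete, every vertex is simplicial, and the conclusion holds vacuously, so a one-line remark covers it.
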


\begin{definition}\label{def:shrinking}
By {\em shrinking a clique tree $T$}, let us denote
the operation of deleting the set $S$ of simplicial vertices in a leaf
$K$ of $T$, yielding a graph with the smaller clique tree described by the lemma.
\end{definition}

\begin{lemma}\label{lem:cliqueConnected}
If $G$ is a chordal graph, $T$ is a clique tree of $G$, and $G[X]$ is connected,
then the cliques of $T$ that contain members of $X$
induce a connected subtree of $T$.
\end{lemma}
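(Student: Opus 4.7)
The plan is to exploit the defining property of a clique tree: for each vertex $v$ of $G$, the set $T_v$ of cliques containing $v$ induces a connected subtree of $T$. The subgraph we want to show is connected is precisely $\bigcup_{x \in X} T_x$, so the task reduces to showing that this union of subtrees is connected.

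First, I would observe that whenever $x$ and $y$ are adjacent in $G$, they are contained together in some maximal clique $K$, so $K \in T_x \cap T_y$. Hence $T_x$ and $T_y$ share at least one node of $T$, and since each is a connected subtree, $T_x \cup T_y$ is a connected subtree. This is the atomic fact that lets edges of $G[X]$ translate into connectivity in $T$.

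Next, I would use the hypothesis that $G[X]$ is connected. Given any two members $x, y \in X$, there is a path $x = v_0, v_1, \ldots, v_k = y$ in $G[X]$. Applying the previous observation to each consecutive pair, $T_{v_i} \cup T_{v_{i+1}}$ is connected, and gluing along the common vertex $v_{i+1}$ (more precisely, along a clique in $T_{v_i} \cap T_{v_{i+1}}$) gives that $\bigcup_{i=0}^{k} T_{v_i}$ is connected by an easy induction on $k$. Thus any two nodes of $\bigcup_{x \in X} T_x$ can be linked through intermediate cliques lying in the union, which is the desired conclusion.

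No serious obstacle is expected; the only thing to be careful about is distinguishing between "connected subgraph of $T$" and "induced subtree of $T$," but since $T$ is a tree, any connected subgraph is a subtree, so the two notions coincide.
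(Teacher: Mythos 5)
Your proof is correct; the paper itself offers no argument for this lemma (it is declared immediate from standard clique-tree results in the cited literature), and your induction along a path in $G[X]$, gluing the subtrees $T_{v_i}$ at a shared clique for each edge, is exactly the standard argument being invoked. Your closing remark also correctly disposes of the induced-versus-union subtlety: the union $\bigcup_{x \in X} T_x$ spans the same node set as the induced subgraph, so its connectivity forces the induced subgraph to be a connected subtree of $T$.
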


\begin{lemma}\label{lem:simplicialClique}
Let $T$ be a clique tree of a chordal graph $G$,
$\mK$ a collection of cliques of $G$, and a clique $C \in \mK$.
If every member of $\mK - C$ contains a vertex that $C$ does not have
and $G[\bigcup \mK \setminus C]$ is connected,
then $C$ does not lie on the path in $T$ between any pair of cliques of $\mK - C$.
\end{lemma}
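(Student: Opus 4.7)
The plan is to argue by contradiction. Suppose $C$ lies on the $T$-path between two cliques $C_1, C_2 \in \mK - C$. I will derive a contradiction by showing that this forces $C$ to share a vertex with the set $X := \bigcup\mK \setminus C$, which is impossible by construction since $X$ explicitly excludes the vertices of $C$.

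First I would use the hypothesis that every member of $\mK - C$ contains a vertex not in $C$: pick $v_1 \in C_1 \setminus C$ and $v_2 \in C_2 \setminus C$. Both $v_1$ and $v_2$ lie in $X$, since $X = \bigcup\mK \setminus C$ and $C_1, C_2 \in \mK$. Next I would invoke Lemma~\ref{lem:cliqueConnected} on the induced subgraph $G[X]$, which is connected by hypothesis. The lemma tells us that the cliques of $T$ that meet $X$ induce a connected subtree $T_X$ of $T$. Since $C_1$ contains $v_1 \in X$ and $C_2$ contains $v_2 \in X$, both $C_1$ and $C_2$ are nodes of $T_X$.

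Now the key step: because $T_X$ is a connected subtree of the tree $T$ and contains both $C_1$ and $C_2$, it must contain the unique $T$-path from $C_1$ to $C_2$. By assumption $C$ lies on that path, so $C$ is a node of $T_X$, meaning $C$ contains some vertex of $X$. But vertices of $X$ are by definition not in $C$, a contradiction.

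I do not expect any serious obstacle here; the argument is essentially a one-line application of Lemma~\ref{lem:cliqueConnected} once the right connected vertex set is identified. The only subtlety worth being careful about is that ``$C$ lies on the path in $T$ between $C_1$ and $C_2$'' refers to the unique $T$-path, so that the subtree-contains-path step is immediate from the fact that subtrees of a tree are convex.
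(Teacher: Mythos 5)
Your proposal is correct and is essentially the argument the paper intends: the paper states this lemma without proof as ``immediate'' from clique-tree facts, with Lemma~\ref{lem:cliqueConnected} stated just beforehand precisely to supply the subtree-convexity step you use. Your reading of $\bigcup \mK \setminus C$ as the vertices of $\bigcup\mK$ outside $C$ is the right one (the other reading makes the lemma false, e.g.\ for a path on four vertices), and with it your contradiction goes through as written.
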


\begin{definition}\label{def:specialCols}
If a Tucker matrix occurs as a submatrix induced by column set $\mC$
and row set $\mR$ of a clique matrix, let 
a {\em private row} for $C \in \mC$ be a row $R \not\in \mR$
that is contained in no column of $\mC$ other than $C$.
Let the {\em special columns} of 
$M_{II} - M_{V}$ be those that are subsets of other columns of these matrices,
and let the {\em special columns} of an instance of $M_I$ on three rows be
all three columns.
\end{definition}

The special columns are the ones
that Tucker identified as ``asteroidal column triples'' in the bipartite
incidence graph of rows and and columns in~\cite{Tuck72}.  However, the existence of simplicial vertices
or ``asteroidal vertex triples'' in a a sense defined in~\cite{LB62} was not examined.
The paper dealt with arbitrary consecutive-ones matrices where
such rows need not occur.

\begin{lemma}\label{lem:TuckerSimplicials}
Let $G$ be a chordal graph and let $M_T$ be a submatrix of a clique matrix
that is an instance of $M_I$ on three vertices or
an instance of $M_{II} - M_V$.  Then the clique matrix has a private row
for each special column of $M_T$.
\end{lemma}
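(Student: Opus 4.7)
The plan is, for each special column $C$, to exhibit a simplicial vertex of $G$ that lies in clique $C$ and in no other clique of $\mC$, by iteratively shrinking a clique tree of $G$ until $C$ becomes a leaf.

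Fix a clique tree $T$ of $G$. The first step is to apply Lemma~\ref{lem:simplicialClique} with $\mK = \mC$ to conclude that $C$ does not lie on the path in $T$ between any two cliques of $\mC - C$. The first hypothesis of that lemma is immediate, because distinct maximal cliques of $G$ are incomparable, so each $C' \in \mC - C$ has a vertex outside $C$. For the second hypothesis, one checks by inspection of $M_I$ on three rows and of $M_{II}$--$M_V$ that the ``column-intersection graph'' on $\mC$, in which two columns are adjacent when some row of $M_T$ has a $1$ in both, is cyclic; hence removing any single column leaves a connected subgraph of cliques, and so $G[\bigcup (\mC - C)]$ is connected. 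Equivalently, all of $\mC - C$ lies in a single component of $T - \{C\}$; let $T_C$ be the subtree consisting of $C$ together with every other component of $T - \{C\}$, so $C \in T_C$ and $T_C \cap (\mC - C) = \emptyset$.

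I would next shrink $T$ iteratively: while $T_C$ has more than one node, pick a leaf $K \neq C$ of $T_C$ (such a $K$ exists because any tree on at least two nodes has at least two leaves, and every leaf of $T_C$ other than $C$ is a leaf of $T$) and delete its simplicial vertices as in Definition~\ref{def:shrinking}. Each such simplicial vertex lies only in the maximal clique $K \notin \mC$, so every clique of $\mC$ survives intact in the shrunken graph. Moreover, a simplicial vertex lies in exactly one maximal clique and therefore in at most one column of $\mC$; but by inspection every row of $M_I$ on three rows and of $M_{II}$--$M_V$ has at least two $1$'s, so no simplicial vertex can belong to $\mR$. Hence $\mR$ is preserved, $M_T$ remains a submatrix of the clique matrix of the smaller graph, and the process terminates with $C$ a leaf of the updated $T$. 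Lemma~\ref{lem:leafDeletion} then produces a simplicial vertex $v \in C$; by the same two arguments $v$ lies in no column of $\mC - C$ and $v \notin \mR$, so $v$ is the desired private row for $C$.

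The main obstacle I expect is verifying the connectedness hypothesis of Lemma~\ref{lem:simplicialClique} cleanly. It follows intuitively from the cyclic structure that underlies Lemma~\ref{lem:cycle} (the columns of each Tucker matrix are arranged in a cycle-like way that mirrors the row overlap cycle), but turning this into a single uniform argument that covers $M_I$ on three rows, the parameterized families $M_{II}$ and $M_{III}$, and the fixed cases $M_{IV}$ and $M_V$ without an unwieldy case analysis is the delicate point; the subsequent shrinking and simpliciality arguments are largely routine once that hypothesis is established.
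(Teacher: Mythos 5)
Your overall route is exactly the paper's: invoke Lemma~\ref{lem:simplicialClique} to conclude that $C$ lies on no path of the clique tree between two members of $\mC - C$, shrink the tree (Definition~\ref{def:shrinking}) away from that side until $C$ becomes a leaf, extract a simplicial vertex by Lemma~\ref{lem:leafDeletion}, and argue it is a private row. The extra bookkeeping you supply (leaves of your $T_C$ are leaves of $T$, the deleted simplicial vertices lie in no column of $\mC$ and in no row of $\mR$ because every row of a Tucker matrix has at least two 1's, hence $\mC$ and $\mR$ survive the shrinking) is correct and is left implicit in the paper.

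The gap is in the step you yourself flagged: the verification of the connectedness hypothesis of Lemma~\ref{lem:simplicialClique}. You verify that $G[\bigcup(\mC - C)]$, the union of the \emph{other} cliques, is connected; but the hypothesis must be read as connectedness of $G$ induced on $(\bigcup\mC)\setminus C$, i.e., with the \emph{vertex set of $C$ deleted}. Under your reading the lemma is simply false: in $K_{1,3}$ with center $b$ and leaves $a,c,d$, take $\mK=\{ab,bc,bd\}$, $C=bc$, and the clique tree $ab\,$--$\,bc\,$--$\,bd$; every other clique has a vertex outside $C$ and $G[\{a,b,d\}]$ is connected, yet $C$ lies on the path between the other two. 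Correspondingly, your witnessing edges (``some row of $M_T$ has a 1 in both columns'') are not good enough, because a row with a 1 in column $C$ is a vertex of $C$ and vanishes when $C$ is deleted; it cannot be used to link the cliques of $\mC - C$. What must be checked by inspection is that for each \emph{special} column $C$, the columns of $\mC - C$ are connected using only rows having a 0 in column $C$ (for $M_I$ on three rows this is immediate: the two remaining columns share the row whose 0 sits in $C$; the analogous check succeeds for the special columns of $M_{II}$--$M_V$, but it is a different inspection from the one you describe). Note also that your blanket claim that removing \emph{any} single column leaves the column-intersection graph connected is false for some non-special columns (a degree-one column's unique neighbor, for example); it is precisely the special columns for which the required connectivity holds, and restricting to rows with a 0 in $C$ is what makes Lemma~\ref{lem:simplicialClique} applicable.
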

\begin{proof}
Let $\mK$ be the set of cliques of $G$ that contain the columns of this instance of $M_T$,
let $C \in \mK$ contain a special column of the instance, and let $T$ be a clique tree
for $G$.
In each case, $C$ does not lie on the unique path in $T$
between any pair of members of $\mK - C$, by Lemma~\ref{lem:simplicialClique}.
Therefore, iteratively shrinking $T$ (Definition~\ref{def:shrinking})
until it cannot be further shrunk without deleting a clique of $\mK$ results in an
induced subgraph $G'$ with a clique tree $T'$ where $C$ is a leaf.  By Lemma~\ref{lem:leafDeletion},
$C$ has a simplicial vertex, which must be a private row for $C$ in $\mK$.
\end{proof}

By examination, it is easy to verify in the case of $M_I$ on three rows,
and $M_{II} - M_{IV}$ that adding the private rows for the special columns
to the rows and columns of the submatrix gives an interval model for
the corresponding LB subgraph.  The resulting set of rows therefore induces
an LB subgraph.

\bibliographystyle{plain}
\bibliography{to}
\end{document}